\documentclass{article}

\usepackage{graphicx,amsmath,psfrag} 
\usepackage{algorithm}
\usepackage{standalone}
\usepackage{tikz}
\usepackage{algpseudocode}               

\usepackage{subcaption}

\usepackage[leftcaption]{sidecap}
\usepackage{todonotes}
\usepackage{mathtools}
\usepackage{cleveref}
\usepackage{amsthm}
\usepackage{complexity}
\usepackage[misc]{ifsym}
\usepackage{commands}

\usepackage{xcolor}
\usepackage{amssymb}
\usepackage{standalone}
\usepackage{tikz}
\usetikzlibrary{arrows.meta}
\usetikzlibrary{arrows}

\usepackage{commands}

\usepackage{algpseudocodex}

\newtheorem{theorem}{Theorem}

\newtheorem{proposition}{Proposition}
\newtheorem{lemma}{Lemma}

\title{Minimum Spanning Trees for Square Crop Plots}

\date{}

\author{
Mingyang Gong \footnote{
Gianforte School of Computing, Montana State University, Bozeman, MT 59717, USA. Email: {\tt mgong4@ualberta.ca}} \and
Adiesha Liyanage\footnote{Gianforte School of Computing, Montana State University, Bozeman, MT 59717, USA. Email: {\tt a.liyanaralalage@montana.edu}}
\and
Braeden Sopp\footnote{Gianforte School of Computing, Montana State University, Bozeman, MT 59717, USA. Email: {\tt braeden.sopp@student.montana.edu}}
\and 
Muzhou Chen \footnote{Gianforte School of Computing, Montana State University, Bozeman, MT 59717, USA. Email: {\tt  muzhouchen@outlook.com}}
\and
Binhai Zhu\footnote{Gianforte School of Computing, Montana State University, Bozeman, MT 59717, USA. Email: {\tt bhz@montana.edu}}
}

\begin{document}
\maketitle

\begin{abstract}
Motivated by accessing crop plots in a field, where each crop plot can only be visited by a given pair of entry/exit points (we have three different types, each having four cases), we study the corresponding Minimum Spanning Tree (MST) problem of such a planar set of axis-aligned unit squares. It turns out that this crop plot distance does not satisfy triangle inequality, even though the whole setup of the problem is geometric. Hence additional care must be taken. The main results of this paper are as follows: (1) we prove that computing the MST of a set $P$ of unit squares under the crop plot distance is NP-hard, (2) the MST of $P$ can be approximated with a factor-4 approximation.
\\\\
\textbf{Keywords:}
Square crop plots, NP-hardness, Planar Monotone 3-SAT, Approximation algorithms
\end{abstract}

\section{Introduction}

Imagine in a farm we have many axis-aligned crop plots each having an even or odd number of rows of crops; moreover, the rows in one plot could be either horizontal or vertical. Obviously one could take care of the rows of crops (say getting rid of weeds) by traversing through the lists of separators between the crops. Under this assumption, to process a whole crop plot one could
only enter and exit a plot through two entry/exit points (ends of two extremal separators). For simplicity, we assume all the plots are axis-aligned unit squares. Then the distance traversed within a crop plot is fixed and is at least one. Hence, if there are $n$ squares the total cost
to traverse them internally is $C\geq n$. (At the end of the paper, we would also discuss using rectangles to model these crop plots.)

It turns out that, for each plot, if we order the vertices of the outermost separators in clockwise order as $\langle a,b,c,d\rangle$ (starting with the bottom-right vertex $a$), then the
three types of entry/exit points are: $\{a,c\}$ or $\{b,d\}$ (type-1), $\{a,b\}$ or $\{c,d\}$ (type-2), and, $\{a,d\}$ or $\{b,c\}$ (type-3). See Figure~1 for an illustration.

\begin{figure}[t]
    \centering
    \includegraphics[width=0.5\textwidth]{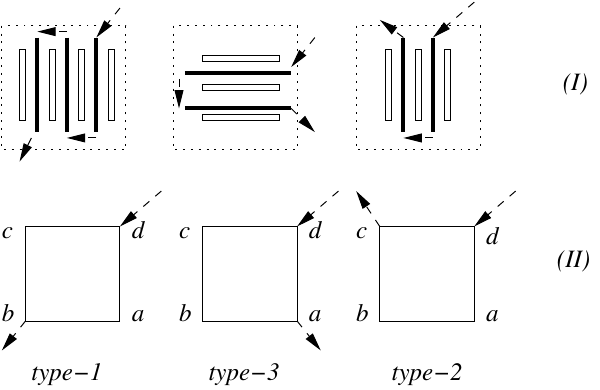}
    \label{fig:hidden-face}
	\caption{Crop plots (I) and their abstraction (II). In (I), skinny rectangles represent crops and thick segments represent separators farmers can walk on. In (II), the three types of (abstracted) squares are illustrated. For the type-1 square, the shown entry/exit points are $d/b$ respectively; but they could be $b/d$, $a/c$ or $c/a$. Other types of squares are similar.}
\end{figure}

Naturally, starting with a square, an interesting problem is to visit all the squares following the above constraints; i.e., we could have the TSP (or the shortest Hamiltonian Path) problem of square plots. This problem is obviously NP-hard for squares (just making the squares very small relative to their inter-distances, then following the classical reduction by Papadimitriou \cite{DBLP:journals/tcs/Papadimitriou77}). A natural approximation for TSP on planar points involves the computation of a minimum spanning tree of these points \cite{Christofides1976}. In the case of square plots, this becomes non-trivial as computing a minimum spanning tree of these square plots is NP-hard.

It turns out that this new crop-plot distance ($d_{cp}(-,-)$) between axis-aligned squares does not satisfy the triangle inequality (see Fig.~3). Then, a more fundamental question arises: if a set $P$ of $n$ square plots are given, what is the complexity of computing a minimum spanning tree of
$P$? Can we obtain a good approximation if it is hard to compute?

Hence, we abstract the problem as follows. Given a set $P$ of $n$ disjoint square plots, how to compute a minimum spanning tree of $P$, ${MST}_{cp}(P)$, where the crop-plot distance between two neighboring
squares in the MST is the shortest (geodesic) distance between their selected entry/exit corner points. (In ${MST}_{cp}(P)$ the entry and exit points of any square can be thought of as connected with some fixed cost. The total such interior cost over all square plots is a constant $C\geq n$.) Hence, the total cost of
${MST}(P)$ is the cost of ${MST}_{cp}(P)$ plus $C$. As a well-known fact, MST could be used to compute an approximate TSP for the geometric or metric-space version where the triangle inequality holds \cite{Christofides1976}.

We summarize the main results of this paper as follows:
\begin{enumerate}
    \item Computing ${MST}_{cp}(P)$ (or ${MST}(P)$) of $P$ is NP-hard, and the reduction is from the Monotone Planar 3SAT (MP-3SAT for short). Different from the related previous proofs, special care must be taken to enforce connectivity.
    \item ${MST}(P)$ can be approximated with a factor of 4.
\end{enumerate}

\noindent
{\bf Previous Related Work~~} The Planar 3-SAT based NP-hardness proof for planar rectilinear problems was initiated by Knuth and Raghunathan \cite{DBLP:journals/siamdm/KnuthR92}. In 2012, de Berg and Khosravi showed that Monotone Planar 3-SAT (MP-3SAT) is NP-hard even when the variable gadget can be drawn on a horizontal line and clauses can be drawn below and above the line (more precisely, the clauses with all positive literals can be drawn above the line while those with all negated literals can be drawn below the line --- all the clauses are rectilinearly connected to the variables) \cite{DBLP:journals/ijcga/BergK12}. This setting makes it easier to simplify some NP-hardness proofs; for instance, the point dispersion problem on rectilinear segments \cite{DBLP:journals/tcs/HeLZZ21}. In fact, we will use a similar reduction from MP-3SAT, but the details are much more involved since additional care must be taken regarding the connectivity. As a result, not all squares can be placed on a integral grid and some extra effort is needed to make sure the residue is small enough so that it does not affect the proof.

In \cite{yang2007minimum}, Yang et al. introduced the Minimum Spanning Tree with Neighborhoods (MSTN) problem—a generalization of geometric MSTs where a tree must span representative points selected from given 2D regions—and analyzed its complexity. They proved that the problem is NP-hard when the neighborhoods consist of disjoint discs and rectangles, and focus on algorithmic solutions for the case of disjoint 2D unit discs.

When the objects are line segments and the objective is to connect them into a Hamiltonian Circuit or Hamiltonian Path, only by adding straight edges connecting the endpoints of the input segments, the problem was shown to be NP-complete by Rappaport \cite{DBLP:journals/siamcomp/Rappaport89}. In this proof the segments could meet at their endpoints, later the problem was shown to be NP-complete even if the input segments are disjoint and the input segments have only 4 orientations \cite{DBLP:journals/dcg/AkitayaKKRST22,DBLP:journals/corr/abs-2108-12812}. Note that these reductions are from the Hamiltonian Path problem in planar cubic graphs (while the initial construction is from Rosenstiehl and Tarjan's rectilinear layout \cite{DBLP:journals/dcg/RosenstiehlT86}, the complete construction is not rectilinear). In our case, the objects are axis-parallel squares, the crop-plot distance is geodesic and our objective is to minimize the traversing cost.

This paper is organized as follows. In Section 2 we give necessary definitions. In Section 3 we present our NP-hardness proof. We present a factor-4 approximation for our MST problem in Section 4. We conclude the paper with some discussions in Section 5. We leave four detailed proofs of Proposition 2 and 4, Lemma 6 and Theorem 8 (which will be marked a *) in the appendix.


\section{Definitions}
Let $P$ be $n$ input square plots, we focus on the complexity of computing ${MST}_{cp}(P)$ in Section 2 and 3. We will come back to ${MST}(P)$ in Section 4 when we discuss its approximation. To simplify presentation, from now on we simply call ${MST}_{cp}(P)$ a spanning tree \(\mathcal{T}\).
Recall that $c({MST}(P))=c({MST}_{cp}(P))+C$, with $C\geq n$ being a fixed number.

Let \(P = \{p_{1}, p_{2}, ..., p_{k}\}\) 
be the set of plots in \(\R^2\), with
each plot being a square with unit width and height,
and let \(t: P \to \{1,2,3\} \) record the type of each plot.
A spanning tree of \(P\) is a tree whose set of vertices is \(P\)
along with an assignment of an entry and exit points to each plot which
agree with its type.
Formally we may represent this as a four tuple 
\(\mathcal{T} = (T, t, \pltin, \out)\) where
\(T\) is a spanning tree on the plots in \(P\), \(t\) is the map recording each plot's type
and \(\pltin: P \to \R^2\) and \(\out: P \to \R^2\) are the functions that
assign the input and output locations of each plot in the plane.
Given a spanning tree \(\mathcal{T}\) and
a \(P' \subseteq P\), we write \(\mathcal{T}[P']\) to denote the spanning forest that corresponds
to the subgraph of \(P'\) induced by \(T\).
In~\cref{fig:plot-notation}, we describe how plots are notated in our figures.
\begin{figure}
    \centering
        \begin{subfigure}[t]{0.2\textwidth}
        \centering
        \includegraphics[width=0.8\textwidth]{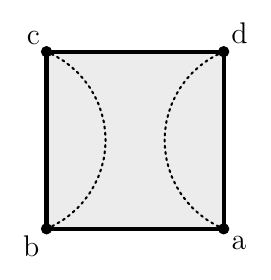}
        \label{fig:type3mod}
    \end{subfigure}
    \hfill 
    \begin{subfigure}[t]{0.2\textwidth}
        \centering
        \includegraphics[width=0.8\textwidth]{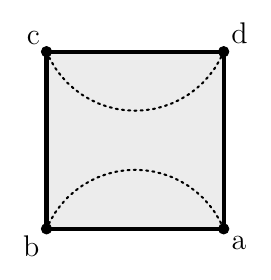}
        \label{fig:type2mod}
    \end{subfigure}
    \hfill
    \begin{subfigure}[t]{0.2\textwidth}
        \centering
        \includegraphics[width=0.8\textwidth]{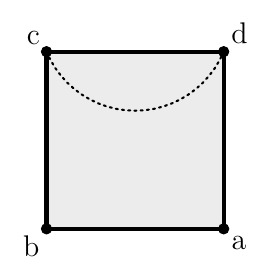}
        \label{fig:type2moddir}
    \end{subfigure}

    \caption{Moving from left to right. The first diagram is used to represent type-3 plots, the second diagram is used to represent type-2 plots. Type-1 plots are not used in this reduction. Finally, the third diagram represent a selection of entrance and exit for a plot. In this case the point \(c\) is chosen as the entrance and \(d\) as the exit for the type-2 plot shown to the right.}
    \label{fig:plot-notation}
\end{figure}

Given two plots \(A, B  \in P\) and a fixed choice of \(\pltin\) and \(\out\) functions,
we define the crop-plot distance as follows:
\begin{equation*}
    d_{cp}(A, B) = \min_{a\in\{\pltin(A), \out(A)\}}\min_{b \in \{\pltin(B), \out(B)\}} d(a, b)
\end{equation*}
where \(d(\pltin(A), \out(B))\) is the length of the shortest path between
\(\pltin(A)\) and \(\out(B)\) that avoid crossing through any plots.
Using the crop-plot distance, we define the cost
of an edge \((A,B) \in T\) as \(d_{cp}(A, B)\). Note the crop distance does not need to obey the triangle inequality as shown in~\cref{fig:crop-triangle}.
The \textit{cost} of a spanning tree \(\mathcal{T}\) is thus defined as
\begin{equation*}
    c(\mathcal{T}) = \sum_{(A,B) \in E(T)}d_{cp}(A, B).
\end{equation*}

\begin{figure}
    \centering
    \includegraphics[width=0.5\linewidth]{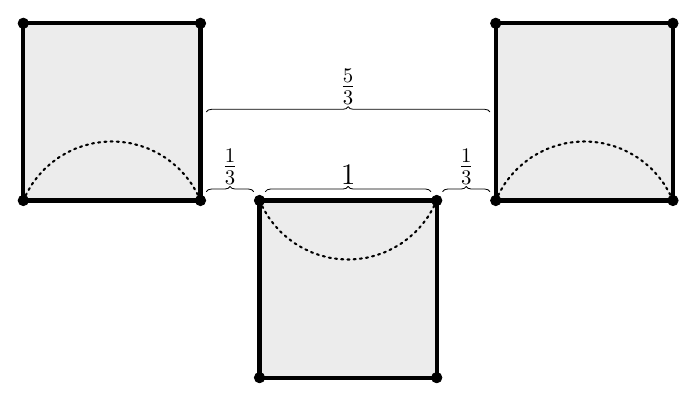}
    \caption{This figure is drawn to scale. Given the assignment of entrances and exits on the plots, the center plot has crop plot distance 1/3 from its neighbors while the neighbors have distance 5/3 from each other.}
    \label{fig:crop-triangle}
\end{figure}

\subsection{Problem Description}
Formally, the decision problem version of the \cpprob
is as follows
\begin{description}
    \item{\textbf{Input}} A collection of disjoint unit squares \(P = \{p_{1}, p_{2}, ..., p_{k}\}\) in \(\R^2\), an assignment of types \(t:P\to \{1,2,3\}\),
    and a nonnegative real number \(\lambda\).
    \item{\textbf{Problem}} 
    Decide if there exists a spanning tree of \(P\), \(\mathcal{T}\), such that
       \[
       c(\mathcal{T}) \leq \lambda.
       \]
\end{description}

\section{Reduction}

We show that the \cpprob is NP-hard using a reduction from 
Monotone Planar 3SAT.
In monotone planar 3SAT each clause contains only positive or
negative literals and the graph admits a rectilinear graph drawing
as seen in~\cref{fig:mprsat-example}.

\begin{figure}
    \centering
    \resizebox{.4\linewidth}{!}{\begin{tikzpicture}[line cap=round,line join=round,>=triangle 45,x=1cm,y=1cm]












\draw [line width=3pt] (0,2)-- (4.5,2);

\draw [line width=3pt] (4.5,2)-- (4.5,0);

\draw [line width=3pt] (4.5,0)-- (0,0);

\draw [line width=3pt] (0,0)-- (0,2);

\draw [line width=3pt] (6,2)-- (10.5,2);

\draw [line width=3pt] (10.5,2)-- (10.5,0);

\draw [line width=3pt] (10.5,0)-- (6,0);

\draw [line width=3pt] (6,0)-- (6,2);

\draw [line width=3pt] (12,2)-- (16.5,2);

\draw [line width=3pt] (16.5,2)-- (16.5,0);

\draw [line width=3pt] (16.5,0)-- (12,0);

\draw [line width=3pt] (12,0)-- (12,2);

\draw [line width=3pt] (18,2)-- (22.5,2);

\draw [line width=3pt] (22.5,2)-- (22.5,0);

\draw [line width=3pt] (22.5,0)-- (18,0);

\draw [line width=3pt] (18,0)-- (18,2);

\draw [line width=3pt] (24,2)-- (28.5,2);

\draw [line width=3pt] (28.5,2)-- (28.5,0);

\draw [line width=3pt] (28.5,0)-- (24,0);

\draw [line width=3pt] (24,0)-- (24,2);

\draw [line width=3pt] (3.375,8.5)-- (3.375,6.5);

\draw [line width=3pt] (3.375,6.5)-- (8.25,6.5);

\draw [line width=3pt] (8.25,6.5)-- (13.125,6.5);

\draw [line width=3pt] (13.125,6.5)-- (13.125,8.5);

\draw [line width=3pt] (13.125,8.5)-- (3.375,8.5);

\draw [line width=3pt] (15.375,8.5)-- (15.375,6.5);

\draw [line width=3pt] (15.375,6.5)-- (20.25,6.5);

\draw [line width=3pt] (20.25,6.5)-- (25.125,6.5);

\draw [line width=3pt] (25.125,6.5)-- (25.125,8.5);

\draw [line width=3pt] (25.125,8.5)-- (15.375,8.5);

\draw [line width=3pt] (1.125,13)-- (1.125,11);

\draw [line width=3pt] (1.125,11)-- (14.25,11);

\draw [line width=3pt] (14.25,11)-- (27.375,11);

\draw [line width=3pt] (27.375,11)-- (27.375,13);

\draw [line width=3pt] (27.375,13)-- (1.125,13);

\draw [line width=3pt] (9.375,-4.5)-- (9.375,-6.5);

\draw [line width=3pt] (9.375,-6.5)-- (19.125,-6.5);

\draw [line width=3pt] (19.125,-6.5)-- (19.125,-4.5);

\draw [line width=3pt] (19.125,-4.5)-- (14.25,-4.5);

\draw [line width=3pt] (14.25,-4.5)-- (9.375,-4.5);

\draw [line width=3pt] (2.25,-7)-- (2.25,-9);

\draw [line width=3pt] (2.25,-9)-- (26.25,-9);

\draw [line width=3pt] (26.25,-9)-- (26.25,-7);

\draw [line width=3pt] (26.25,-7)-- (7.125,-7);

\draw [line width=3pt] (7.125,-7)-- (2.25,-7);

\draw [line width=3pt] (8.25,2)-- (8.25,6.5);

\draw [line width=3pt] (20.25,2)-- (20.25,6.5);

\draw [line width=3pt] (25.125,2)-- (25.125,6.5);

\draw [line width=3pt] (15.375,2)-- (15.375,6.5);

\draw [line width=3pt] (14.25,2)-- (14.25,11);

\draw [line width=3pt] (13.125,2)-- (13.125,6.5);

\draw [line width=3pt] (27.375,2)-- (27.375,11);

\draw [line width=3pt] (3.375,2)-- (3.375,6.5);

\draw [line width=3pt] (1.125,2)-- (1.125,11);

\draw [line width=3pt] (9.375,0)-- (9.375,-4.5);

\draw [line width=3pt] (14.25,-4.5)-- (14.25,0);

\draw [line width=3pt] (19.125,-4.5)-- (19.125,0);

\draw [line width=3pt] (2.25,-7)-- (2.25,0);

\draw [line width=3pt] (26.25,-7)-- (26.25,0);

\draw [line width=3pt] (7.125,0)-- (7.125,-7);

\begin{scope}[font={\Huge}, every node/.style={scale=2}]
\draw[color=black] (2.37648864597189,.9) node {\(x_1\)};

\draw[color=black] (8.393665955594528,.9) node {\(x_2\)};

\draw[color=black] (14.374595209135103,.9) node {\(x_3\)};

\draw[color=black] (20.391772518757744,.9) node {\(x_4\)};

\draw[color=black] (26.372701772298317,.9) node {\(x_5\)};

\draw[color=black] (8.25,7.5) node {\(x_1 \lor x_2 \lor x_3 \)};
\draw[color=black] (20.25 ,7.5) node {\(x_3 \lor x_4 \lor x_5\)};

\draw[color=black] (14 ,11.9) node {\(x_1 \lor x_3 \lor x_5\)};

\draw[color=black] (14.5,-5.5) node {\(\overline{x_2}\lor \overline{x_3} \lor \overline{x_4}\)};

\draw[color=black] (14.5,-8) node {\( \overline{x_1} \lor \overline{x_2} \lor \overline{x_5}\)};
\end{scope}

\end{tikzpicture}}
    \caption{An example of monotone planar 3SAT
    instances \((x_1 \lor x_2 \lor x_3) \land (x_1 \lor x_3 \lor x_5) \land (x_3 \lor x_4 \lor x_5) \land ( \overline{x_1} \lor \overline{x_2} \lor \overline{x_5}) \land (\overline{x_2}\lor \overline{x_3} \lor \overline{x_4})  \) and it's corresponding
    rectilinear graph drawing.
    }
    \label{fig:mprsat-example}
\end{figure}

The fundamental idea of the reduction is to ``trace''
the rectilinear graph drawing of an \mrpsat instance 
\(\phi\) with gadgets and link the variables together such that any MST consists of one large central branch, called the \emph{base}, which spans through all the variables and has a single branch spanning each clause.
When tracing the variable and clauses,
minimum distance between any plot is \(1/2\)
and clauses are separated by from the variables by
a gap of size \(5\) and from other clauses by a larger gap of size \(11\).
This allows us to compute a lower bound on
the cost of the MST given \(\phi\).
Each variable is traced with type-2 plots
so picking only the top corners or bottom corners
as the entrance and exits can achieve the minimum distance between plots.
We show that MST where only the minimum distance between plots in variable gadgets achieves the lower bound and corresponds to a truth assignments satisfying \(\phi\).

We start by addressing any concerns related
to connecting plots across distances
during the tracing process and then focus on description of gadget used in the trace.
Lastly, we describe the necessary rescaling of rectilinear graph embedding to avoid any size issues that
could appear when tracing in gadgets.
In~\cref{fig:plot-notation}, we describe the meaning of various elements found in figures in the remaining sections.



\subsection{Connector Gadget}
The primary tool we use to connect plot in different gadgets is a \emph{Connector gadget} or
simply a \emph{connector}.
In a construction of a connector, we need to compute square roots of \(n\)-digit binary numbers on a Turing machine, where \(n\) is some number we specify.
It is clear two \(n\)-digit numbers can be multiplied on a Turing machine in polynomial time with respect to \(n\).
For this reduction we do not worry ourself any
further with the specifics of this runtime.
With this in mind, if we are given a \(n\)-digit number, the square
root of the number can be computing to \(\tau\) digits of accuracy
using binary search over all (\(n + \tau\))-digit numbers which require
at most \(\log(2^{n + \tau}) = O(n+\tau)\) number of multiplication calls.


\begin{figure}[t]
    \centering

    \begin{subfigure}[t]{0.45\textwidth}
        \centering
        \includegraphics[width=\textwidth]{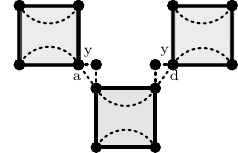}
        \label{fig:horizontal_connector}
    \end{subfigure}
    \hfill
    \begin{subfigure}[t]{0.4\textwidth}
        \centering
        \includegraphics[width=\textwidth]{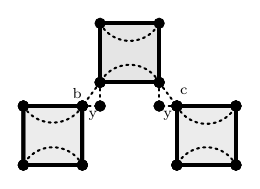}
        \label{fig:horizontal_c_mirror}
    \end{subfigure}

    \caption{Horizontal connector gadget connects two type-2 plots which is located on the same horizontal coordinates to their \(a\) to \(b\) corners with small error for \(y \in [0, 0.5]\).The right hand side of the figure shows the mirror configuration that allows us to connect the two plots via the bottom edges in the MST.}
    \label{fig:horizontal_connector_gadgets}
\end{figure}

\begin{figure}[t]
    \centering

    \begin{subfigure}[t]{0.3\textwidth}
        \centering
        \includegraphics[width=\textwidth]{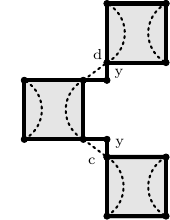}
        \label{fig:vertical_connector}
    \end{subfigure}
    \hfill
    \begin{subfigure}[t]{0.3\textwidth}
        \centering
        \includegraphics[width=\textwidth]{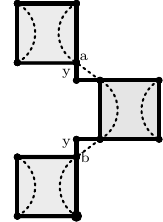}
        \label{fig:vertical_c_mirror}
    \end{subfigure}

    \caption{Vertical connector gadget connects two type-3 plots which is located on the same vertical coordinates to their \(c\) to \(d\) corners with small error for \(y \in [0, 0.5]\). The right hand side of the figure shows the mirror configuration that allows to connect the two plots via the right edges in the MST.}
    \label{fig:vertical_connector_gadgets}
\end{figure}

Given two plots of both type-2 or type-3 that lie on the same horizontal (or vertical) coordinates, a connector gadget is series of plots that connects an entrance and exit of each plot using edges of cost approximately \(1/2\).
Consider the case where we wish to connect
plots with the same horizontal coordinates.
As seen in~\cref{fig:2_3_components},
there is an easy arrangement of plots to connect an exit and entrance that are 2 or 3 units apart.
As every integer greater than \(1\) can be written as a sum of only twos and threes, this provides
a means to connects the plots when distance
is an integer and greater than 2.
When the plots instead have a non-integer distance, the last arrangement in~\cref{fig:2_3_components} is
able to span the fractional part of the distance.
This again can only occur if the distance between the plots is sufficiently large.

In all the shown arrangements, the optimal strategy is to route the connections through the bottom entrance/exits in the top plot and
then move down to top entrances/exits in the subsequent plot, giving a total cost of \(1\). The same reasoning applies to the vertical connector. Furthermore, the configuration can be mirrored to symmetrically connect the right side of the type-3 plots. In~\cref{fig:horizontal_connector_gadgets} and \cref{fig:vertical_connector_gadgets}, an arrangement is provide for every case.





We now show the following claim:
\begin{lemma}
    Let \(P_1\) and \(P_2\) be two type-2 (type-3) plots whose top and bottom (left/right) edges lie on a common horizontal (vertical) line. If the distance \(g\) between these aligned edges satisfies \(g \ge 4\), then there exists a sequence \(C\) of horizontal (vertical) connector units such that the set of plots formed by \(P_1\), \(P_2\), and \(C\) admits a minimum spanning tree in which all but at most two edges have length exactly \(\frac{1}{2}\). The remaining edges may deviate from \(\frac{1}{2}\) by at most \(\epsilon\), where \(\epsilon=\frac{1}{2^\tau}\) represents the maximum numerical error. Furthermore, the number of connector units required is \(O(\lceil g \rceil)\).
    \label{lemma:scaling_with_connectors}
\end{lemma}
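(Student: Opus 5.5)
We give the proof for the horizontal case (two type-2 plots whose top/bottom edges lie on a common horizontal line). The vertical case for type-3 plots is the same after a rotation by $90^\circ$, using the mirrored configurations of \cref{fig:vertical_connector_gadgets}. The proof has three steps: decompose the gap $g$ into pieces each bridged by a fixed arrangement of \cref{fig:2_3_components}; show the resulting tree has edge costs $\frac12$ except at the fractional piece; and argue this tree is minimum.

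\emph{Step 1 (decomposition).} Write $g = m + f$ with $m=\lfloor g\rfloor \ge 4$ and $f\in[0,1)$. If $f=0$, write $m$ as a sum of $2$'s and $3$'s, which is possible since $m\ge 2$. If $f>0$, reserve one non-integral unit of the last type in \cref{fig:2_3_components}. This unit spans a length of the form $r+f$ with some fixed integer $r\ge 2$. Its internal offset is a solution $y\in[0,\frac12]$ of a Pythagorean-type equation, so that the slanted geodesic equals $\frac12$. We choose $g\ge 4$ precisely so that $m-r\ge 2$, and we write $m-r$ as a sum of $2$'s and $3$'s. The number of units is then at most $\lceil g\rceil/2+1 = O(\lceil g\rceil)$.

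\emph{Step 2 (placement and edge costs).} Place the units consecutively between $P_1$ and $P_2$, each sharing the common horizontal line. In each integral unit, every plot has its entrance/exit chosen on the bottom corners of the upper plot and the top corners of the lower plot. The path $P_1 \to C \to P_2$ then has every edge of length exactly $\frac12$, since all coordinates are half-integers relative to the line.

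The coordinates of the fractional unit involve $\sqrt{\cdot}$ of a rational determined by $f$. We compute this square root to $\tau$ bits by the binary-search procedure described above. Rounding moves only the plots of that one unit, so at most the two edges incident to the rounded plots deviate from $\frac12$, each by at most $\epsilon=2^{-\tau}$.

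\emph{Step 3 (minimality).} All plots are disjoint, and the construction keeps every pair of plots at geodesic distance at least $\frac12$; we check this corner-to-corner for each arrangement, and for the fractional one up to $\epsilon$. Hence any spanning tree on $|P_1\cup P_2\cup C|=k$ plots costs at least $(k-1)\cdot\frac12$, up to the $\epsilon$ slack on the at most two perturbed pairs. Our path achieves this bound, so it is a minimum spanning tree.

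The main obstacle is the fractional arrangement. We must verify that for every $f\in[0,1)$ there is an offset $y\in[0,\frac12]$ that makes the slanted edge exactly $\frac12$. We must also show that no other pair of plots in that arrangement comes closer than $\frac12$, in particular across type-2 corners not used by the chosen entrance/exit assignment. Otherwise a cheaper alternative tree could undercut the bound. We would handle this by explicit coordinates and a direct distance check, relying on the figure's geometry.
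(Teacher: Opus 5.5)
Your proposal is correct and follows the paper's proof. In both, $g$ is split into pieces of length $2$ and $3$ plus one remainder piece realized by the last arrangement of \cref{fig:2_3_components}, only the at most two edges at that remainder unit deviate from $\frac12$ (by at most $\epsilon$), and $O(\lceil g\rceil)$ units are used. Two small remarks: in the paper the remainder unit spans $d\in[1,2]$, so your $r$ is really $1$ and $g\ge 4$ leaves $\lfloor g\rfloor-1\ge 3$ to fill with $2$'s and $3$'s; and your Step~3 lower bound is extra rigor the paper leaves to the figure, though as phrased (``up to the $\epsilon$ slack'') it does not by itself certify exact minimality when a perturbed edge is longer than $\frac12$.
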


\begin{proof}

Let \(g \in \mathbb{R}\) with \(g \ge 4\). Note that \(g\) can be expressed as a sum of segments of lengths \(2\), \(3\), and a remainder \(d \in [1,2]\). Formally, there exist integers \(a, b \ge 0\) such that \(g = 2a + 3b + d.\) Consider two type-2 plots \(P_1\) and \(P_2\) whose top edges lie on the same line parallel to the horizontal axis, where \(g\) is the horizontal distance between them. 
Using the connector gadgets shown in \cref{fig:2_3_components}, we may realize distances of length \(2\), \(3\), and any \(d \in [1,2]\). 
Thus, we place \(a\) connectors of length \(2\), \(b\) connectors of length \(3\), and one connector of length \(d\) to fill the gap between \(P_1\) and \(P_2\). Let \(C\) denote this collection of connectors.

Next, consider the minimum spanning tree on the set of plots \(C \cup \{P_1, P_2\}\).
As shown in \cref{fig:2_3_MST}, all elements can be joined using edges of length \(\tfrac{1}{2}\). 
Since connectors of length \(2\) and \(3\) introduce no numerical error, only the connector corresponding to the remainder \(d\) may have numerical error. As at most two such edges occur in the MST, the accumulated error is bounded by \(2\epsilon\).

Finally, the number of connector components required is \(a + b + 1 \le \lceil g \rceil = O(\lceil g \rceil).
\) An identical argument applies to bottom-edge alignment and to vertically aligned type-3 plots.

    \label{proof:scaling_with_connectors}
\end{proof}

\begin{figure}[t]
    \centering
    \includegraphics[width=0.8\textwidth]{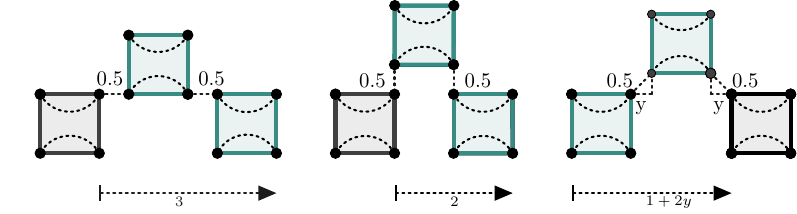}
	\caption{Horizontal components that can introduce lengths \(2,3\) or \(d \in [1,2]\). }
\label{fig:2_3_components}
\end{figure}

\begin{figure}[t]
    \centering
    \includegraphics[width=0.8\textwidth]{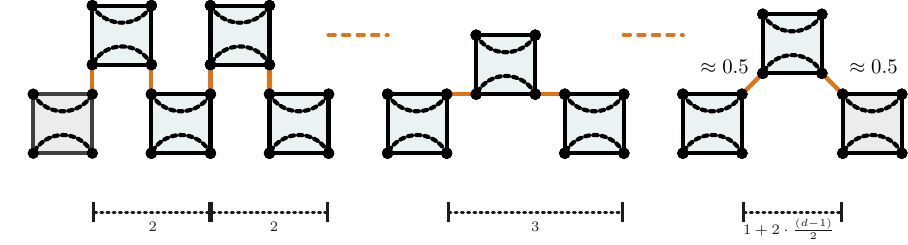}
	\caption{Grey plots represent the \(P_1\) and \(P_2\). Light green plots represent the added horizontal plots for connecting \(P_1\) and \(P_2\). Each orange edge represents \(0.5\) distance edges except for the last two edges that was used in the last horizontal connector. The last two values approximate \(0.5\) some error \(\epsilon\).}
\label{fig:2_3_MST}
\end{figure}



\subsection{Tracing the Embedding}
We now describe the tracing process which constructs an instance of the
\cpprob.
Although we are ignoring scale issues for the time being,
at this point we assume the instance has already been
rescaled to fall so the ends of each line and corner of every rectangle has integer coordinates.
It should be obvious that this rescaling can be done so every integer used in a coordinate is \(O(nm)\).

The tracing process starts by constructing the base.
Here first every variable in the embedding is replaced
with a variable  gadget as described in \cref{sec:trace-var}.
After the variable gadgets are added, they are then connected with link subgadgets.

Then, moving away from the base, each clause is traced as described in~\cref{sec:trace-clause}.
This starts with the placement of a gap gadget where the leg of clause
meets the variable in the embedding (see \cref{sec:trace-gap}).
At the other end of the leg, where it meets with clause,
a difference in gadgets is placed which is depending on which variable in the clauses we are current tracing.
In order from left to right: the first variable adds a left turn gadget,
the second variable adds a center gadget, the third variable produces a right turn gadget.
These gadgets are then connected with their corresponding gap
gadget using a vertical connector.
Finally all three gadgets placed on the clause are connected
with horizontal connectors.

During trace, we produce connector with
\(\tau ={2(5m+n-1)+7}\) so that the absolute
error is at most \(\epsilon = 1/2^{2(5m+n-1)+7}\).
This level precision in minor detail is required for our proof of correctness later and all arguments still
hold with only minor modification if exact values
for square root computation are used instead.

\begin{figure}[t]
    \centering
    \includegraphics[width=0.8\textwidth]{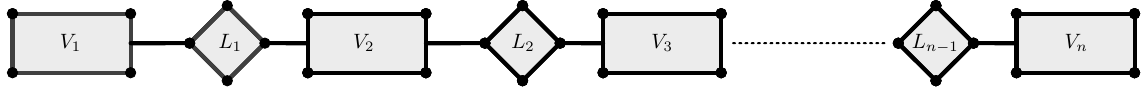}
	\caption{Base contains variable gadgets connected via link gadgets.}
\label{fig:base}
\end{figure}

\begin{figure}[t]
    \centering
    \includegraphics[width=0.55\textwidth]{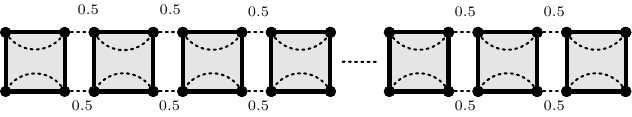}
	\caption{For each variable, we create a variable gadget.}
\label{fig:variable_subgadget}
\end{figure}

\begin{figure}[t]
    \centering
    \includegraphics[width=0.9\textwidth]{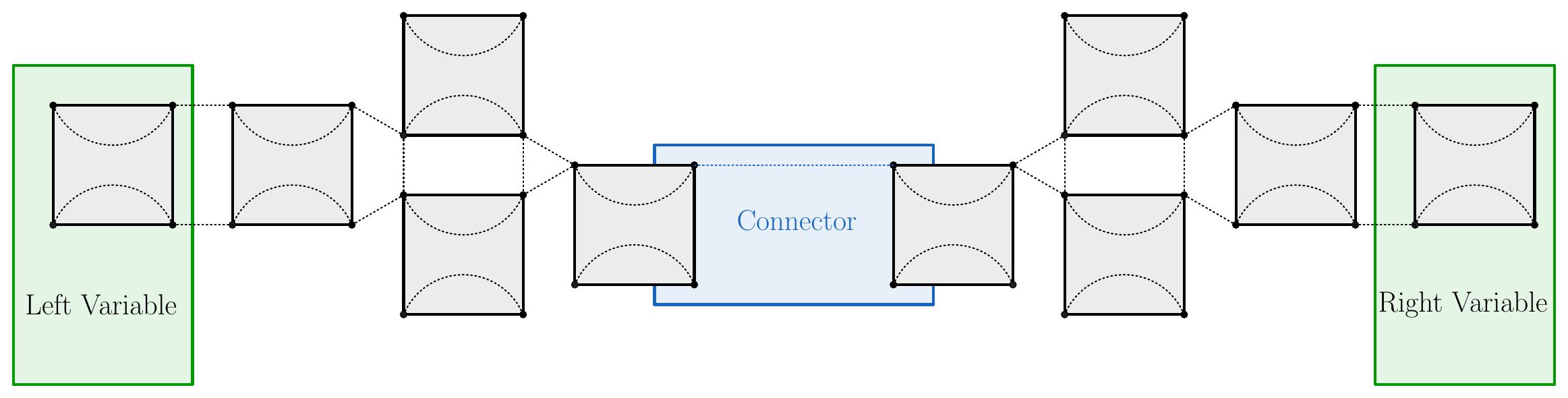}
    	\caption{
        The organization of the
        link gadget.
        Each dotted line represents a cost of size \(\frac{1}{2}.\)}
\label{fig:link_subgadget}
\end{figure}

\begin{figure}[t]
    \centering
    \includegraphics[width=0.9\textwidth]{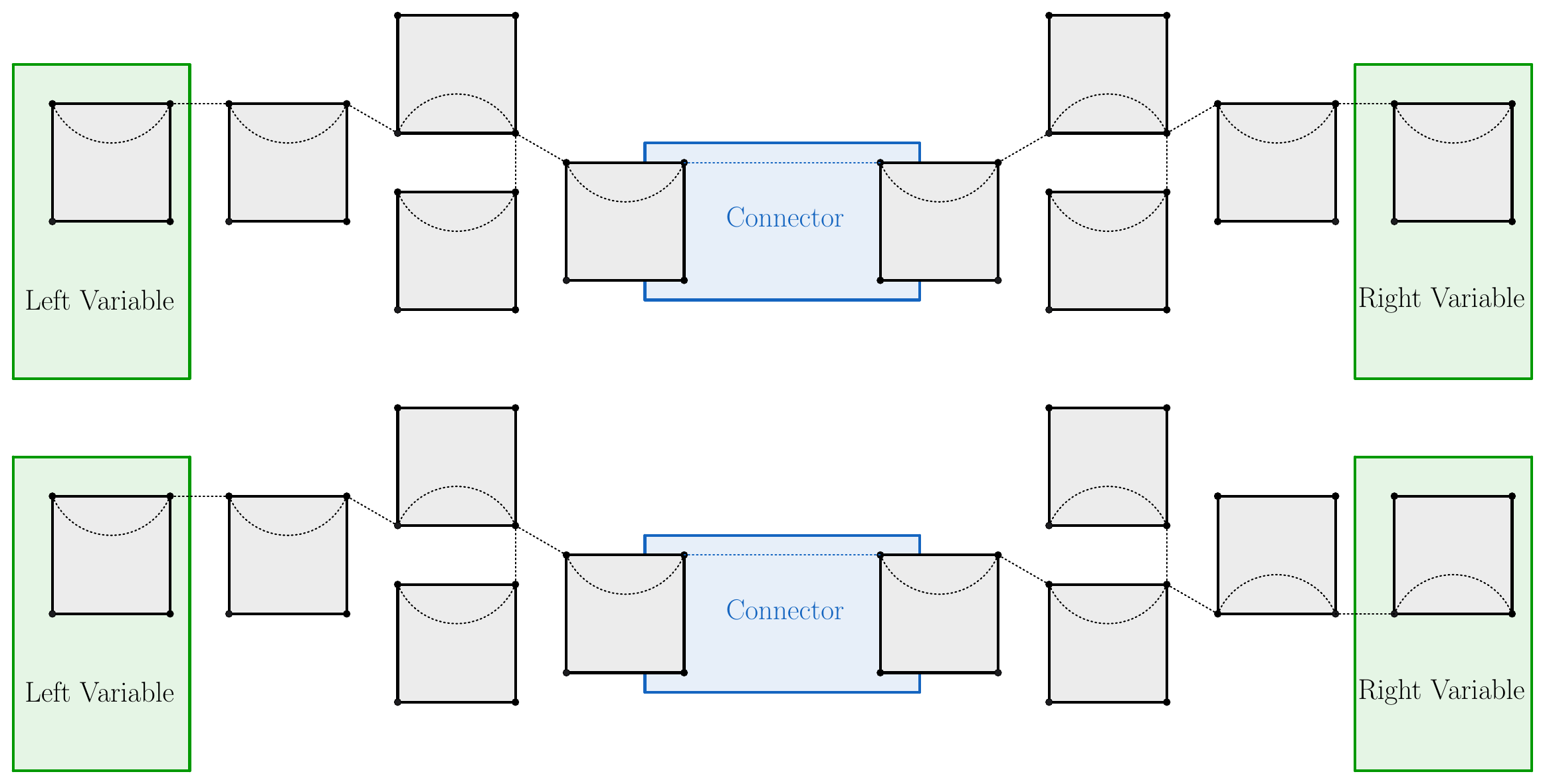}
    	\caption{The top image depict the tree through the link gadget that connects a variable assigned \emph{true} to another variable assigned \emph{true}. The bottom depicts a tree that connects a variable assigned \emph{true} to a variable assigned \emph{false}.}
\label{fig:link_routes}
\end{figure}

\subsection{Trace of the \emph{Base}}
\label{sec:trace-base}

The \emph{base} forms the central structure of the reduction instance. It contains the variable gadgets corresponding to all Boolean variables in \(\phi\), connected by special components called \emph{link subgadgets}. Together, these variable gadgets and link subgadgets form a spine-like structure that we refer to as the base. In the following sections, we describe each type of gadget in detail. Please refer to Figure~\ref{fig:base}.

\subsubsection{Variable Gadget}
\label{sec:trace-var}

For each boolean variable $x_i$ in \(\phi\), we create a variable gadget. The variable gadget is simply a row of $O(\ell_i)$ type-2 plots where the distance between adjacent plots is \(\frac{1}{2}\) and $\ell_i$ is the number of times $x_i$ and $\overline{x_i}$ appearing in $\phi$. 
Due to the initial scaling of the embedding,
    each plot in a variable gadgets starts at
    a multiple of \(3/2\) along the horizontal axis.

Note that when we connect these plots using an MST where either the top corners of the plots or the bottom corners of the plots will be connected. In fact, this is used as the truth assignment for that particular variable. Please refer the Figure \ref{fig:variable_subgadget}.
We refer to the top entrances and exits of
the variables as positive and the bottom as
negative.

\subsubsection{Link Gadget}
\label{sec:trace-link}

The link gadget connects adjacent variable gadgets during tracing and allows for the variables assignment to switch without taking edges of length greater than \(1/2\). Whenever we have two adjacent variable gadgets, we need to ensure that these two variable gadgets can switch its truth assignment, in order to facilitate that we use a special gadget which we refer to as link gadget. Please refer to \cref{fig:link_subgadget}. Suppose we want to connect variables gadget \(v_{i-1}\) with \(v_{i}\), then the last plot of \(v_{i-1}\) and the first plot of the \(v_{i}\), we add this special link gadget in between these two plots. Note that link gadget contains 3 parts. Left rung which connects to \(v_{i-1}\), middle part which is a horizontal connector and the third part that connects the horizontal connector to the right rung, where right rung connects to the first plot in \(v_{i}\). This gadget allows to switch the truth assignment between variable gadgets if it requires to do so as seen~\cref{fig:link_routes}.

\subsection{Trace of the Clauses}
\label{sec:trace-clause}

The trace of clauses consists of 3 parts:
a turn gadget, a clause center gadget,
and a gap gadget to connect the clause with
it's variables.
The layout of the trace of a clause is shown in
Figure~\ref{fig:total-clause-trace}.

\begin{figure}
    \centering
    \resizebox{.7\linewidth}{!}{\input{images/tikz/CombinedFigure}}
    \caption{
    The green box at the bottom denotes elements produced
    when tracing the base.
    The dotted lines inside the connector represent entrance/exits pairs connected by the gadget. Dotted lines outside of the connector represent distance of 1/2 units unless otherwise specified.
    }
    \label{fig:total-clause-trace}
\end{figure}
\begin{figure}[t]
    \centering
    \includegraphics[width=0.9\textwidth]{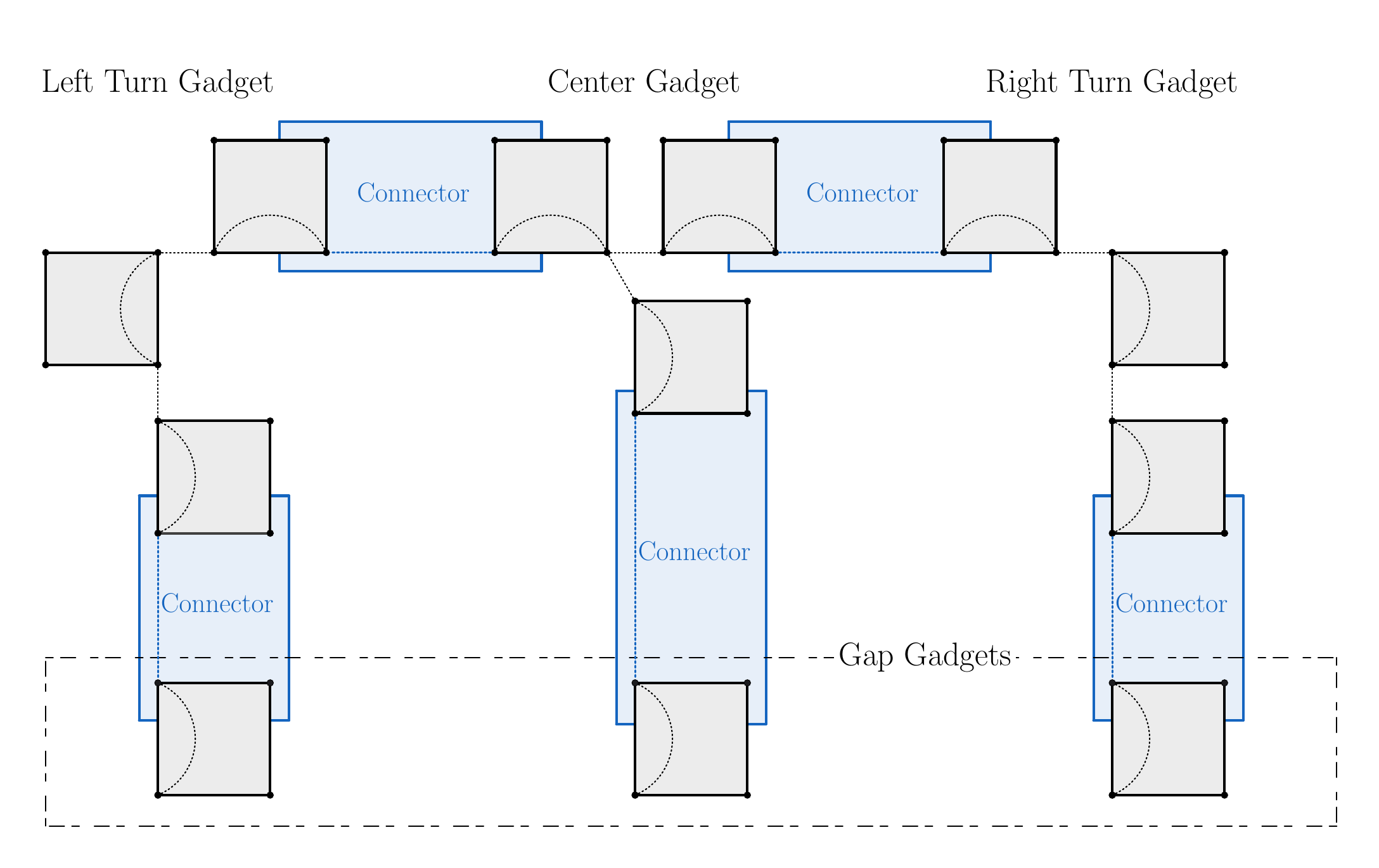}
    	\caption{An example of a MST for plots produced during the trace of the clause taking only edges that are cost \(1/2\) or edges in a connector.
        Each dotted line represents a the shortest path of size \(1/2.\)}
\label{fig:clause-routing}
\end{figure}

\subsubsection{Gap Gadgets}
\label{sec:trace-gap}
The gap gadget is placed wherever the
\emph{leg} of the clause touches a variable in the embedding.
In the case of clause with only positive literals, the bottom-left corner of the gap gadget is placed 5 units above the where lines in the embedding meet.

Again as we assume the embedding has already been scaled to lie on
integer grid with unit in grid corresponding to 3/2 units in the
reduction, we have gap gadget placed directly above a plot in
the variable gadget as seen in~\cref{fig:total-clause-trace}.

\subsubsection{Turn and Centers Gadget}
\label{sec:trace-turn-and-padding}

The turn gadgets are placed where the
left and right legs of the embedding connect with the clause.
The top-left corner of the bounding box for the left turn gadget is
placed where left leg meets the clause in the embedding.
The top-right corner of the bounding box for the right turn gadget is placed
where the right leg meets the clause in the embedding.
The top-left corner of the center gadget is
placed 5/4 units left of where the center leg meet the clause.
This ensures the bottom plot in the center gadget to align with the gap
gadget for the leg as shown in~\cref{fig:total-clause-trace}.

We use connector to connect the turns to the center as well as the
turns and centers to the gap gadgets.
As shown in \cref{fig:clause-routing}, an important property
of the center gadget is it allows for a spanning tree 
using only edges of only cost 1/2 to exit over plots in
tracing of the clause.
Note that if we select a different gap gadget as a start,
we can still produce a spanning tree with a similar topology
due to the triangle in the center gadget.
\subsection{Scaling and Padding}

We describe all the required scaling of the rectilinear embedding to ensure we always
have space for the gadgets and topology of MST to behave as described earlier.
As mentioned before, we assume that the rectilinear embedding has already been transformed to
fall on an integer grid, with the scale being 3/2 in both horizontal and vertical directions.
Finally, the entire drawing is scaled by an integer multiplier until
all the following are satisfied:
\begin{itemize}
        \item The minimum distance between legs of the clauses
          is greater than 15 units apart.
          As the total size of a connector and any two gadget is placed on the clause is not more than 10,
          this provide sufficient room to place
          any components needed between the legs.
          Additional 5 units are added to 10 for padding reasons.

    \item The minimum distance between clause is greater than 14.5 units a part.
          Note that the left/right turn gadgets extend down 3.5 units from line of clause and are the largest components on the line,
          thus 3.5 units require for us to construct these gadgets.
          An additional 11 units are added to 3.5 units to act as padding.

    \item The minimum distance between clauses
           and
          the minimum size of a clause is greater than 14.5 units.
    \item No clause or variable is closer than 13.5 units apart.
           We require 6.5 units for the gap gadget, 4 units for a connector
          and at most 3 units for any gadget place on clause directly.
    \item The minimum distance between variables in the embedding
          is 13. As the width of each side of the link gadget is no more than 4.5
          units and a connector requires 4 units, this provides enough
          room for the entire link gadget as well as the connector.

\end{itemize}

Now let the reduction instance for 
\(\phi\) be denoted by \(P_\phi\).
Denote the plots used to trace the \(i\)-th
clause by \(C_i\).
\begin{proposition}[*]
    \label{lem:clause-dist}
    Let \(C_i\) and \(C_j\) be the plots used
    to trace two different clauses in \(\phi\).
    Then for any assignment of entrances and
    exits for \(C_i\) and \(C_j\)
    \[11 \leq  \min_{A \in C_i, B \in C_j} d_{cp}(A, B).\]
\end{proposition}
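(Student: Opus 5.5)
We reduce the statement to a purely geometric separation bound. By definition, \(d_{cp}(A,B)\) is the length of a shortest obstacle-avoiding path between chosen corner points of \(A\) and \(B\). Such a path is never shorter than the Euclidean distance between those corners. Any corner of a plot lies in the closed square of that plot. So for every assignment of entrances and exits,
\[
d_{cp}(A,B)\;\ge\;\operatorname{dist}(A,B)\;=\;\min_{p\in A,\,q\in B}\|p-q\|_2 ,
\]
and this lower bound does not depend on the assignment. It therefore suffices to show that the closed regions covered by \(C_i\) and \(C_j\) are at Euclidean distance at least \(11\).

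To do this, we will bound the region occupied by \(C_i\) by a small neighbourhood of its clause in the scaled rectilinear embedding. The plots of \(C_i\) are of four kinds: the left and right turn gadgets, the center gadget, the horizontal connectors along the clause line, and the gap gadgets with the vertical connectors on the three legs. We read off from the construction (Figure~\ref{fig:total-clause-trace}) where each kind sits.
\begin{itemize}
\item The turn gadgets extend at most \(3.5\) units downward from the clause line.
\item The horizontal connectors and the center gadget hug the clause line, within a few units.
\item Each leg's gadgets lie in a thin vertical strip of width \(O(1)\) around the leg, running from \(5\) units above the variable line up to the clause line.
\end{itemize}
This places \(C_i\) inside a region \(R_i\) consisting of the clause segment thickened downward by \(3.5\) units, plus thin strips around its legs.

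We then split into cases according to the relative position of clauses \(i\) and \(j\) in the monotone planar drawing. If one clause lies above and one below the variable line, the variable gadgets separate them and the gap of \(5\) on each side already exceeds \(11\) in total. If both lie on the same side, then either their horizontal extents are disjoint, or one is nested above the other, whose legs pass outside or between its legs. In every same-side subcase we use the scaling requirements:
\begin{itemize}
\item distinct clauses, and a clause and a foreign leg, are more than \(14.5\) units apart in the embedding;
\item legs are more than \(15\) units apart.
\end{itemize}
Subtracting the thickness of \(R_i\) and \(R_j\), namely \(3.5\) units for the turn gadgets plus the connector width, should still leave at least \(11\). This matches the "\(3.5\) plus \(11\) padding" accounting stated in the scaling list.

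The main obstacle is the bookkeeping of gadget extents. We must confirm that no gadget, in particular a turn gadget or the horizontal connector carrying the fractional remainder from Lemma~\ref{lemma:scaling_with_connectors}, protrudes sideways beyond its assumed strip. We must also handle the nested case, where a higher clause's leg passes close to the corner of a lower clause's turn gadget. There we would compare the leg strip against the turn gadget's bounding box and use the \(15\)-unit leg spacing together with the \(14.5\)-unit clause spacing to reach at least \(11\) in both the horizontal and vertical directions.
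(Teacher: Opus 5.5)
Your plan is essentially the paper's argument, organized differently. Both reduce to Euclidean separation forced by the scaling margins. The paper splits cases by whether $A$ and $B$ lie in leg parts or in clause-line parts; you split by the relative position of the two clauses. Your explicit bound $d_{cp}(A,B)\ge\operatorname{dist}(A,B)$ and your separate opposite-side case are things the paper leaves implicit or omits.

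The gap is in the step that carries the content: the final budget, which you leave as ``should still leave at least $11$''. The margins are tight. The clause spacing is $14.5=3.5+11$, so subtracting a thickness for both $R_i$ and $R_j$, or any connector width on top of the $3.5$, leaves strictly less than $11$. Similarly, describing the horizontal connectors and center gadget as hugging the clause line ``within a few units'' allows plots on the far side of the clause line. If such plots exist, the nested case fails.

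What you need is the one-sided anchoring in the construction:
\begin{itemize}
\item the left (right) turn gadget has the top-left (top-right) corner of its bounding box at the junction of the left (right) leg with the clause;
\item the center gadget's top edge lies on the clause line;
\item nothing on the clause line extends beyond the turn gadgets' depth, as the paper's $3.5+11$ accounting presupposes.
\end{itemize}
Hence every clause-line plot of $C_i$ lies in the band of depth $3.5$ on the base side of its clause line, horizontally between its outer legs. Every leg plot lies within $1.5$ of its leg.

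With this, the same-side cases close as follows. In the nested case, pairs involving the upper clause's clause-line plots are separated vertically by more than $14.5-3.5=11$; only one $3.5$ is ever lost. All remaining same-side pairs are separated horizontally by more than $15-1.5-1.5=12$, which is the paper's first case. These are side-by-side clauses, or a leg of the upper clause against the lower clause; the latter includes the turn-gadget corner you worried about, since that gadget does not protrude past its own leg.

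Also correct the opposite-side count. There $5+5=10$ does not exceed $11$. You need the unit height of the variable row between the two gap gadgets, giving $5+1+5=11$, which meets the bound rather than exceeding it.
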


\begin{proposition}
    \label{lem:prop:not-clause-dist}
    Let \(C_i\) be the plots used
    to trace some clause in \(\phi\).
    Then for any assignment of entrances and
    exits for \(C_i\) and \(C_j\)
    \[ 5 \leq  \min_{A \in C_i, B \in P_{\phi} \setminus C_i} d_{cp}(A, B).\]
\end{proposition}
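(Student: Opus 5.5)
We have to show that every plot of $C_i$ lies at geodesic distance at least $5$ from every plot outside $C_i$. Since $d_{cp}(A,B)$ is a minimum over corner points of $A$ and $B$, and every such geodesic has length at least the Euclidean (indeed at least the $L_\infty$) distance between the two squares, it suffices to prove a purely geometric statement. Namely, for every $A\in C_i$ and $B\in P_\phi\setminus C_i$, the axis-aligned unit squares $A$ and $B$ are at $L_\infty$-distance at least $5$. In particular the choice of entrances and exits plays no role, which is why the bound holds for any assignment.

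I would split $P_\phi\setminus C_i$ into three groups and handle each separately.
\begin{enumerate}
\item Plots of another clause $C_j$. Here I invoke \cref{lem:clause-dist} directly, which gives distance at least $11\ge 5$.
\item Plots of the base, meaning the variable gadgets, the link gadgets and their horizontal connectors.
\item Plots of the clause traces that attach to the other side of the base, i.e.\ clauses of opposite sign. These are already covered by case 1.
\end{enumerate}
So the real work is case 2.

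For case 2, the base occupies a horizontal strip. The variable gadgets are single rows of unit type-2 plots. The link gadgets have rungs which, by the scaling constraints (width at most $4.5$ per side, gadgets lying between variables), extend only a bounded height above or below this strip; I would read the exact bound from \cref{fig:link_subgadget}. The part of $C_i$ closest to the base is formed by its gap gadgets. By construction, the bottom-left corner of each gap gadget is placed $5$ units above the point where the leg meets the variable line, and that point is the top edge of the variable row. So the vertical separation between any gap-gadget plot and any variable plot is at least $5$. All other plots of $C_i$ (vertical connectors, turn and center gadgets, horizontal connectors) lie higher still, because the clause is at least $13.5$ units from any variable. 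For negative clauses the argument is symmetric, below the base.

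The remaining concern is horizontal proximity to link gadgets, whose rungs may rise above the variable row. I expect this to be the main obstacle. I would argue that legs attach only to plots inside variable gadgets, never to link gadgets. The rung heights must then be compared with the $5$-unit offset, and the horizontal spacing between a leg and the neighbouring link gadget must be controlled using the inter-variable spacing of $13$. If a rung's top is higher than the variable row, I would use the horizontal distance instead. The leg's gap gadget sits directly above a variable plot, a link gadget starts only after the last variable plot, and the padding in the scaling ensures that this horizontal gap is at least $5$. Finally, I would take the minimum over all cases to conclude the bound $5$.
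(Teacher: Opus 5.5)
Your proposal is correct and follows essentially the same route as the paper: plots of other clauses are dispatched by \cref{lem:clause-dist}, and for the base the binding pair is the gap-gadget plot and the variable plot directly beneath it, which are exactly $5$ apart. Your explicit reduction from geodesic to $L_\infty$ separation and your separate treatment of the link-gadget rungs add care that the paper compresses into ``it is clear''; that last step still rests on layout facts (rung heights, and legs attaching well away from the ends of a variable gadget) that neither you nor the paper actually establishes.
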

\begin{proof}
By \cref{lem:clause-dist}, we need to only
verify that no plot on the base is closer than \(6\) units to any plot on the clause. It is clear that the closest
plots between the base and the plot used to trace a clause are the single plot in gap gadget and the plots that fall directly under the gap gadgets. As the minimum distance between clause and this plot is \(5\) the result holds.
\end{proof}

\subsection{Hardness Proof}
We now go about showing that any MST for \(\phi\)
has the topology claimed earlier before showing the if and only if claim.

\begin{proposition}[*]
    Let \(C_i\) be the set of plots used in the
    clause gadget for the \(i\)th clause.
    In any MST \(\mathcal{T}\) of \(P_{\phi}\), we have \(\mathcal{T}[C_i]\) is a connected component. Further \(\mathcal{T}[C_i]\) connects 
    to the remainder of \(\mathcal{T}\) with exactly one edge to the base that occurs between a gap gadget in the clause and plot in the vertex gadget.
    \label{prop:topology}
\end{proposition}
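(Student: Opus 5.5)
We argue by exchange, using the distance separations of \cref{lem:clause-dist} and \cref{lem:prop:not-clause-dist}. Fix an MST \(\mathcal{T}\) of \(P_\phi\). Deleting every edge of \(\mathcal{T}\) that has exactly one endpoint in \(C_i\) splits \(\mathcal{T}[C_i]\) into components \(K_1,\dots,K_r\), and each \(K_j\) touches the rest of the tree through at least one such boundary edge. By \cref{lem:prop:not-clause-dist}, every boundary edge costs at least \(5\). On the other hand, by the construction in \cref{sec:trace-clause} and the routing of \cref{fig:clause-routing}, the plots of \(C_i\) can be spanned internally, for any fixed start gap gadget, by edges of cost \(1/2\) together with connector edges. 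By \cref{lemma:scaling_with_connectors}, those connector edges cost at most \(1/2+\epsilon\).

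\emph{Step 1: \(\mathcal{T}[C_i]\) is connected.} Suppose \(r\ge 2\). Then some pair of components \(K_a,K_b\) is joined in the intended clause routing by an edge \(e\) of cost at most \(1/2+\epsilon\). Adding \(e\) to \(\mathcal{T}\) creates a cycle, and this cycle leaves \(C_i\) and must later return, so it contains a boundary edge of cost at least \(5\). Removing that boundary edge yields a spanning tree of strictly smaller cost, contradicting minimality.

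One subtlety needs care. The cost of \(e\) depends on the entrance/exit assignment, and the MST fixes that assignment. I would therefore allow reassigning entrances and exits within \(C_i\) to the routing of \cref{fig:clause-routing}, and show this is harmless:
\begin{itemize}
\item internal edges are rebuilt with the same tree topology, each costing at most \(1/2+\epsilon\);
\item the only edge to outside that needs preserving is one edge from the gap-gadget plot, and its cost changes by at most the unit diameter of a plot.
\end{itemize}
A bounded change is dominated by the saving of at least \(5-(1/2+\epsilon)\) obtained from removing a boundary edge.

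\emph{Step 2: exactly one edge leaves \(C_i\).} Once \(C_i\) induces a connected subtree, two boundary edges leaving \(C_i\) would form, together with the path inside \(C_i\), a path between two outside vertices. I would then compare the following cases:
\begin{itemize}
\item If both edges go to the base, then the base is internally connected by \(1/2\)-cost edges through the variable and link gadgets. Adding the corresponding base path edge closes a cycle containing a boundary edge of cost at least \(5\), and we remove it.
\item If one edge goes to another clause \(C_j\), then by \cref{lem:clause-dist} that edge costs at least \(11\). It can be swapped for a base edge of \(C_j\) of cost about \(5\).
\end{itemize}
Both swaps strictly decrease the cost, since base edges and connector edges are all at most \(1/2+2\epsilon\) and the padding guarantees the gaps of \(5\) and \(11\).

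\emph{Step 3: the single edge goes from a gap gadget to the plot directly below it.} By construction, the only pairs achieving distance near \(5\) between \(C_i\) and \(P_\phi\setminus C_i\) are a gap gadget plot and the variable plot beneath it. Every other pair is farther by at least a constant, because of the scaling and padding rules (clauses at least \(13.5\) from variables, turn and center gadgets higher up). Replacing a nonconforming edge by the edge from a gap gadget to the plot beneath it, possibly after reassigning the entrance and exit of that gap plot, lowers the cost.

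\emph{Main obstacle.} The hard part is making the entrance/exit reassignment argument rigorous. The swaps must compare costs across different assignments, and the margins (5 versus \(1/2\), with errors \(\epsilon\)) must absorb all changes. In particular, showing that the minimal boundary edge must attach at a gap gadget requires a careful geometric case check of the distances.
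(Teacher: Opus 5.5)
Your Step 1 has a genuine gap, and it is the one you flag as the main obstacle. The bound $1/2+\epsilon$ on the routing edge $e$ between $K_a$ and $K_b$ holds only for the corners chosen in the intended routing. It does not hold for the entrance/exit assignment that $\mathcal{T}$ actually carries: two neighbouring plots on which opposite sides are selected can be more than $1$ apart. So the exchange as written is not justified. The repair you sketch does not close the gap, for two reasons:
\begin{itemize}
\item Switching corners on plots of $C_i$ changes the cost of \emph{every} edge incident to them, including all boundary edges, and you do not control how many of those there are.
\item The claim that only one outside edge, from the gap plot, needs to be preserved is unfounded. The forest $\mathcal{T}[P_\phi\setminus C_i]$ may consist of several components joined to one another only through $C_i$, so discarding the other boundary edges can disconnect the tree.
\end{itemize}

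The missing idea, which is how the paper argues, is a bound that does not depend on the assignment. Suppose $A,B\in C_i$ are adjacent in the intended routing, with a gap of about $1/2$. For \emph{any} choice of corners you can walk along at most half the perimeter of $A$, cross the gap, and walk along at most half the perimeter of $B$. Hence $d_{cp}(A,B)\le 2+\tfrac12+2=4.5$, or $4.5+\epsilon$ for connector pairs, which is less than $5$. With this bound your cycle exchange in Step 1 goes through verbatim, and no reassignment is needed.

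The same device repairs Step 2, which as written also relies on $1/2$-costs. Consecutive base plots are within $4.5<5$ under any assignment. A gap plot is within $2+5+2=9$ of the variable plot beneath it, which is below the $11$ of Proposition 1; this is the paper's $10<11$ comparison. In Step 2 you should also choose the added edge so that it crosses the cut left by deleting the expensive edge:
\begin{itemize}
\item in the base case, a consecutive pair on the base path between the two attachment points;
\item in the clause-to-clause case, a gap edge of $C_i$ or of $C_j$, or a base pair between their two variable plots, depending on which side of the cut these plots fall.
\end{itemize}

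Finally, Step 3 aims higher than the statement requires. The statement asks only for an edge from a gap gadget to some plot of a variable gadget, not necessarily the plot directly beneath it. Note also that the paper's own proof stops at the base-versus-other-clause comparison, so the geometric case check you plan there cannot be borrowed from it.
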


We now define two types of edges in any MST for \(P_\phi\).
A \emph{good} edge is an edge that is of length 1/2,
an edge in a connector of length within \(\epsilon\) of 1/2, or edge length of 5 between the gap gadget and variable gadget.
Any edge which is not a good edge is said to be a \emph{bad} edge.
This provide the following claim:

\begin{lemma}
    \label{lem:good-is-sat}
    Any MST of \(P_\phi\) which uses only good edges
    corresponds to a truth assignment that satisfies \(\phi\).
\end{lemma}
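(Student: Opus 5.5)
Let \(\mathcal{T}\) be an MST of \(P_\phi\) that uses only good edges. From it we will read off a truth assignment, check that the assignment is well defined, and then check that it satisfies every clause. The argument has three steps.

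\textbf{Step 1 (each variable gadget has a well-defined value).} Fix a variable gadget \(v_i\). Its plots are type-2 and placed \(1/2\) apart in a row. Since every edge is good, each edge inside \(v_i\), or between \(v_i\) and a link gadget, has length exactly \(1/2\). Between horizontally adjacent type-2 plots, a shortest path of length \(1/2\) only exists when both plots use their top corners \(\{c,d\}\) or both use their bottom corners \(\{a,b\}\). A top-to-bottom pairing would need extra vertical travel of at least \(1\), which is not good.

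We still need every plot of \(v_i\) to use the same side, and this depends on how the plots are connected in \(\mathcal{T}\). By \cref{prop:topology}, each clause \(\mathcal{T}[C_i]\) attaches to the base by exactly one edge. So the base (variables plus links) induces a connected subtree of \(\mathcal{T}\). Inside the base, the only plots within distance \(1/2\) of a variable plot are its row neighbours and the link rungs, so consecutive plots of \(v_i\) must be joined through this chain. Propagating the ``same side'' condition along the chain shows that all plots of \(v_i\) use top corners, or all use bottom corners. We set \(x_i=\) true in the first case and false in the second. The link gadgets (\cref{fig:link_routes}) let adjacent gadgets take either value independently using only \(1/2\) edges, so they impose no further constraint.

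\textbf{Step 2 (each clause is attached to a literal that its edge certifies).} By \cref{prop:topology}, each clause \(C_j\) attaches to the base through a single edge, running from one of its gap gadgets to the variable plot directly below (or above) it. Since this edge is good, its length is exactly \(5\).

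Take a positive clause. The gap gadget sits \(5\) units above the top edge of the variable plot. A vertical geodesic of length exactly \(5\) then exists only if that variable plot has selected top corners (positive) and the gap plot has selected its bottom corners at a matching horizontal position. Selecting bottom corners would add at least the plot height \(1\) to the path. Hence the variable of the attaching leg is true, and the positive clause is satisfied. Negative clauses are symmetric: the gap gadget sits \(5\) below, so the variable must use bottom corners and is false, and the negated literal holds. Every clause therefore contains a true literal, and the assignment satisfies \(\phi\).

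\textbf{Main obstacle.} I expect the hardest part to be the second half of Step 1, namely ruling out a good-edge MST in which \(v_i\) ``switches sides'' midway. Such a tree would have to reconnect the two halves through a detour, for example through clause plots or connectors, made entirely of good edges. I would exclude this with \cref{prop:topology}: a clause connects to the base by exactly one edge, so no path between two base plots can pass through a clause. The detour would then have to stay inside the base. There, by the spacing rules of the scaling section, any non-row neighbour of a variable plot, other than the designated link rungs, is at distance more than \(1/2\) but not equal to \(5\). Such an edge would be bad, so no detour exists.

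A secondary point is checking that the exact length \(5\) cannot be achieved with mismatched corners. The precision \(\epsilon\) chosen for the connectors is far below \(1\), so connector rounding cannot produce such a coincidence.
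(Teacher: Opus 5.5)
Your proof is correct and follows the same route as the paper. You read each variable's value off the side (top or bottom) its gadget uses, use \cref{prop:topology} to get a single gap-to-variable edge per clause, and note that a good edge of length exactly \(5\) forces the variable plot onto the side facing the clause, which makes the corresponding literal true. Your Step 1 also justifies that each variable gadget uses one side throughout: the base induces a connected subtree, and only row neighbours and link rungs can be joined to a variable plot by \(1/2\)-edges. The paper simply asserts this consistency.
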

\begin{proof}
    Note as only good edges are chosen, only
    the positive or negative side of each variable 
    corresponding to a true or false assignment
    of variable.
    Further each clause must connect to the base once by~\cref{prop:topology}.
    Therefore, if the clause is positive, then some variable in the clause must be assigned true where gap gadget and variable gadget
    have edge of cost 5.
    The argument is identical if the clause is negative.
    Thus, the truth assignment produced by the MST must satisfy \(\phi\).
\end{proof}

We now show the following lemma necessary for
our reduction.

\begin{lemma}[*]
    \label{lem:crop-iff}
    The \mrpsat instance \(\phi\) is satisfiable if and only if the MST of plot of \(P_\phi\) has a cost less than \(\frac{9}{2}\cdot m + \frac{1}{2}(|P_\phi| -1) + 1/128\). 
\end{lemma}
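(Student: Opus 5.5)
We prove the two directions separately, both organized around the edge accounting suggested by the definition of good and bad edges. A spanning tree on \(P_\phi\) has exactly \(|P_\phi|-1\) edges. By \cref{prop:topology}, every MST contains exactly one edge joining each clause trace \(C_i\) to the base. By \cref{lem:prop:not-clause-dist}, each such edge costs at least \(5\). Since this is \(9/2\) more than a half-unit edge, these \(m\) edges contribute at least \(m\cdot\frac{9}{2}\) above the baseline \(\frac12(|P_\phi|-1)\). Every other edge costs at least \(1/2\), because the minimum distance between any two plots in the trace is \(1/2\). Hence every spanning tree satisfies \(c(\mathcal{T})\ge \frac92 m+\frac12(|P_\phi|-1) - E\), where \(E\) collects the negative connector deviations. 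By \cref{lemma:scaling_with_connectors}, each connector contributes at most \(2\epsilon\) of error. The number of connectors is \(O(5m+n)\): a bounded number per clause, plus one per link. With \(\epsilon = 2^{-(2(5m+n-1)+7)}\), the total error is far below \(1/128\).

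\textbf{(\(\Rightarrow\))} Given a satisfying assignment, we build a tree that uses only good edges. In each variable gadget we take the top corners if the variable is true and the bottom corners otherwise, so consecutive plots are joined by edges of length \(1/2\). Adjacent variable gadgets are joined through the link gadgets, using whichever route of \cref{fig:link_routes} matches the two assignments; every edge on these routes has length \(1/2\) or is a connector edge. For each clause we pick one literal that is satisfied. The gap gadget of that leg is attached to the plot of the variable gadget directly beneath it by an edge of length exactly \(5\); this edge exists because the true (respectively false) side of that variable is exposed toward the clause. The rest of the clause is then spanned as in \cref{fig:clause-routing}, using the center gadget's triangle so that any of the three gap gadgets can serve as the root. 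The total cost is then \(\frac92 m+\frac12(|P_\phi|-1)\) plus the connector errors. These errors total at most \(2\epsilon\) times the number of connectors, which is \(<1/128\), so the cost is below the threshold.

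\textbf{(\(\Leftarrow\))} Suppose the MST costs less than \(\frac92 m+\frac12(|P_\phi|-1)+1/128\). We claim that every edge is good, and then \cref{lem:good-is-sat} gives a satisfying assignment. The key step is a gap lemma: any bad edge costs at least \(1/2+\delta\) (or, for a clause-to-base edge, at least \(5+\delta\)) for some fixed \(\delta\ge 1/64\). Intuitively this holds because all coordinates outside connectors lie on a grid of pitch \(1/4\) or \(1/2\). Any non-minimal choice of corners therefore forces a detour of at least a fixed constant, for example crossing a plot side of length \(1\), or taking a diagonal \(\sqrt{(1/2)^2+1}\) instead of \(1/2\).

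Adding one bad edge to the lower bound gives a cost of at least \(\frac92 m+\frac12(|P_\phi|-1)+\delta-(\text{total error})\), which exceeds the threshold. The same slack handles a clause attached through a gap gadget whose variable is on the wrong side: that attaching edge is then longer than \(5\) by a constant, because it must reach the far corners of the base plot.

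The main obstacle is this uniform gap lemma. Establishing it requires a case check over the local configurations (the variable gadgets, the link rungs, the turn/center gadgets, and the fractional connector unit). In the fractional connector unit, the square-root coordinates make the admissible alternatives irrational, so we must show that these alternatives still differ from \(1/2\) by at least a constant and not merely by about \(\epsilon\). I would first try to prove it by enumerating, for each gadget, all corner pairs at Euclidean distance below \(1/2+1/16\) and showing that only the intended pairs qualify.
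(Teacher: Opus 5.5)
Your proposal follows the paper's proof step for step. It uses the same lower bound \(\frac92 m+\frac12(|P_\phi|-1)\) minus connector error, with \(m\) clause-to-base edges of cost at least \(5\) and all others about \(1/2\). It uses the same good-edge construction for the forward direction, and the same backward contradiction: one bad edge adds more than \(1/64\), the total connector error is below \(1/128\), and \cref{lem:good-is-sat} finishes the argument.

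The uniform gap lemma you leave open is supplied in the paper by two direct estimates rather than a case enumeration. First, a bad edge in place of a \(1/2\)-edge or a connector edge adds more than \(1/4\). Second, a bad edge in place of the length-\(5\) gap edge adds at least \(\sqrt{101}/2-5>1/64\). That second value is not a wrong-side detour, which would cost at least \(1\) extra. It corresponds to the diagonal from the gap plot's bottom corner to the top corner of the neighbouring variable plot, at horizontal offset \(1/2\), of length \(\sqrt{5^2+(1/2)^2}=\sqrt{101}/2\approx 5.025\). This binding case is why the penalty is only \(1/64\) and the slack is only \(1/128\).
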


\begin{theorem}
\label{cpmst-hard}
The decision problem variant of \cpprob 
is NP-hard.
\end{theorem}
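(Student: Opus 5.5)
The plan is a standard polynomial-time many-one reduction from \mrpsat, which is NP-hard by de Berg and Khosravi \cite{DBLP:journals/ijcga/BergK12}. Given an instance \(\phi\) with \(n\) variables and \(m\) clauses, together with its rectilinear drawing, I would output the instance \((P_\phi, t, \lambda)\). Here \(P_\phi\) is the set of plots produced by the tracing procedure, namely the base, the clause gadgets, and the connectors of \cref{lemma:scaling_with_connectors}. The threshold is chosen as
\[
\lambda = \tfrac{9}{2}\cdot m + \tfrac{1}{2}(|P_\phi|-1) + \tfrac{1}{128},
\]
taking care that ``\(\le \lambda\)'' in the decision problem matches the strict ``less than'' in \cref{lem:crop-iff}. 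If necessary I would shrink the additive slack slightly, e.g.\ use \(1/256\), which the \(\epsilon\)-bounds from the connectors easily accommodate. Correctness then follows immediately from \cref{lem:crop-iff}: \(\phi\) is satisfiable if and only if the minimum spanning tree of \(P_\phi\) has cost at most \(\lambda\), which holds if and only if the constructed instance is a yes-instance.

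The real work is showing that the construction has polynomial size and can be computed in polynomial time. The steps, in order, are as follows.

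\begin{enumerate}
\item \textbf{Drawing.} The rectilinear drawing of \(\phi\) can be computed in polynomial time on an integer grid of size \(O(n+m)\) per side.
\item \textbf{Scaling.} The drawing is scaled by the factor \(3/2\) and then by a constant integer multiplier. This multiplier only needs to meet the fixed separation thresholds in the Scaling and Padding list (15, 14.5, 13.5, 13), so it is \(O(1)\), and all coordinates stay \(O(nm)\) or polynomial.
\item \textbf{Gadget sizes.} Each variable gadget has \(O(\ell_i)\) plots, and each link, gap, turn and center gadget has \(O(1)\) plots.
\item \textbf{Connectors.} Each connector bridges a gap \(g\) of polynomial length. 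By \cref{lemma:scaling_with_connectors} it uses \(O(\lceil g\rceil)\) units, so \(|P_\phi|\) is polynomial in \(n+m\).
\end{enumerate}

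The remaining concern is numerical: the fractional-remainder connector units require square roots. These are computed to \(\tau = 2(5m+n-1)+7\) bits by binary search with \(O(n+\tau)\) multiplications of polynomial-length numbers. Each coordinate is therefore a rational with polynomially many bits, and \(\lambda\) is a rational computable in polynomial time from \(m\) and \(|P_\phi|\).

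The step I expect to be the main obstacle is confirming that the truncated square roots do not break the equivalence. The connectors only approximate the intended \(1/2\) edges up to \(\epsilon = 2^{-\tau}\), and there are at most two such edges per connector, with \(O(5m+n)\) connectors in total. The accumulated error is then at most about \(2(5m+n)\cdot 2^{-\tau} \ll 1/128\). I would first check that this bound is exactly what \cref{lem:crop-iff} assumes, so that I can invoke it as stated on the rounded instance. Otherwise I would argue directly that the ``yes'' cost stays below \(\lambda\) and that every ``no'' tree, which by \cref{prop:topology} and \cref{lem:good-is-sat} must contain a bad edge, exceeds \(\lambda\). This second argument uses the fact that a bad edge costs at least a constant more than a good one, while the rounding perturbs the total by far less than \(1/128\).
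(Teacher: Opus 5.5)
Your proposal is correct and is essentially the paper's own proof. The paper likewise argues that the traced instance is built in polynomial time (coordinates of size $O(nm)$, fixed-size gadgets, and $5m+n-1$ connectors, each constructed in $\mathrm{poly}(n,m)$ time including the $\tau$-bit square roots), and it takes correctness directly from \cref{lem:crop-iff}. Your worry about ``$\le \lambda$'' versus the strict inequality is harmless: the proof of \cref{lem:crop-iff} already gives cost at most $\lambda=\frac{9}{2}m+\frac{1}{2}(|P_\phi|-1)+\frac{1}{128}$ in the satisfiable case and strictly more than $\lambda$ in the unsatisfiable case, so no shrinking of the slack is needed.
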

\begin{proof}
As we assume the initial rectilinear graph drawing falls on the integer
grid using only integers at most \(O(nm)\) in size,
after all rescaling, the largest real value in rectilinear graph drawing is
still \(O(nm)\). Thus whenever we draw a connector we spend no more
than \(poly(n,m)\) time. Tracing the clauses requires \(5\) connectors
and gadgets of fixed size, and tracing the bases is in the size of the grid and uses
\(n-1\) connectors, thus the entire construction of the reduction instance can be
done polynomial time.  In combination with~\cref{lem:crop-iff}, this theorem is proved.

\end{proof}

Theorem~\ref{cpmst-hard} shows that computing ${MST}_{cp}(P)$ is NP-hard. As ${MST}(P)$ is really ${MST}_{cp}(P)$ plus the additional path inside each plot (with a total internal length $C$), computing ${MST}(P)$ is also NP-hard. In the next section, we present a factor-4 approximation for ${MST}(P)$.

\section{A Simple Approximation}

We present a simple approximation algorithm for ${MST}(P)$, whose cost is $c({MST}(P))=c({MST}_{cp}(P))+C$.
The algorithm is as follows:
\begin{enumerate}
    \item Compute the Voronoi diagram of squares in $P$, using the algorithm by McAllister, Kirkpatrick and Snoeyink \cite{DBLP:journals/dcg/McAllisterKS96}. Based on the dual of this diagram, compute the minimum spanning tree
    of the squares in $P$, under the $L_2$ distance, $MST_2(P)$.
    \item For each of the edges in $MST_2(P)$, say $e=(s_i,s_j)$, arbitrarily select a pair of entry/exit points
    following the types of square $s_i$ and $s_j$ (if their entry/exit points are not yet fixed), and then compute the shortest geodesic distance between the selected entry/exit
    points in $s_i$ and $s_j$.
\end{enumerate}

\begin{theorem}[*]
Given a set $P$ of $n$ square plots, the minimum spanning tree of $P$ admits a factor-4 approximation which runs in $O(n^2\log n)$ time.
\end{theorem}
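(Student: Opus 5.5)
We compare the algorithm's output against two lower bounds on the optimum and then charge each edge of the output to them.

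\emph{Lower bounds.} Let $\mathcal{T}^*$ be an optimal tree, so $c(MST(P)) = c(\mathcal{T}^*) + C$. Every edge of $\mathcal{T}^*$ joins corner points of two squares by a geodesic. A geodesic is at least as long as the straight segment, and that segment is at least the $L_2$ distance $d_2(s_i,s_j)$ between the two squares. Hence $c(MST_{cp}(P)) \ge c(MST_2(P))$. Also $C \ge n$, since every plot is traversed internally at cost at least one.

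\emph{Cost of one output edge.} Fix an edge $e=(s_i,s_j)$ of $MST_2(P)$ and any choice of entry/exit corners $u \in s_i$, $v \in s_j$. The key claim is
\[
d_{geo}(u,v) \le d_2(s_i,s_j) + c_i + c_j ,
\]
where $c_i,c_j$ are bounded multiples of the unit side length. To prove it, take the closest points $p \in \partial s_i$ and $q \in \partial s_j$. Walk from $u$ to $p$ along the boundary of $s_i$; this costs at most $2$, since any two boundary points of a unit square are within perimeter distance $2$. Then follow a shortest obstacle-avoiding path from $p$ to $q$, and walk along $\partial s_j$ from $q$ to $v$, again at most $2$.

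The main obstacle is the middle leg. In $d_{cp}$ the geodesic must avoid all other plots, while $d_2(s_i,s_j)$ is a straight segment that might pass through a third square $s_k$. I would handle this with the empty-circle / MST property of the Voronoi dual. If segment $pq$ crossed a square $s_k$, then $s_k$ would be strictly closer to both $s_i$ and $s_j$ than they are to each other. Replacing $e$ by the edges $(s_i,s_k)$ and $(s_k,s_j)$ would then give a cheaper spanning tree, contradicting minimality of $MST_2(P)$. Making this rigorous for unit squares, rather than points, needs care, because $pq$ could clip a corner of $s_k$. A fallback is to let the path detour around $s_k$ along its boundary at additional cost $O(1)$ per square, charged to $s_k$. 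Since each square is charged $O(1)$ times, the total is bounded by $C$.

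\emph{Summing.} Summing over the $n-1$ edges, the algorithm's cost is
\[
c(\text{ALG}) \le c(MST_2(P)) + 4(n-1) + C \le c(MST_{cp}(P)) + 4n + C \le c(MST(P)) + 3C\cdot\tfrac{4}{3}\ldots
\]
I would tune the accounting so the total is at most $c(MST_{cp}(P)) + 4C$, which uses $n \le C$. The right choice is to charge the boundary walks to the two endpoint squares, using that each square has degree at most a constant in the Euclidean MST. That gives $c(\text{ALG}) \le 4\,(c(MST_{cp}(P)) + C) = 4\, c(MST(P))$.

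\emph{Running time.} The Voronoi diagram of squares takes $O(n\log n)$ time, and the MST on its dual takes $O(n\log n)$. Each of the $n-1$ geodesic shortest-path computations among $n$ square obstacles takes $O(n\log n)$ with a standard visibility or continuous-Dijkstra method. The total is $O(n^2\log n)$.
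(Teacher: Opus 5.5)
\textbf{Gap: the additive constant per edge only yields a factor of about $5$.}
You fix a corner $u$ and bound the boundary walk from $u$ to the closest point $p$ by half the perimeter, $2$. So each output edge costs at most $d_2(s_i,s_j)+4$, and $c(\mathrm{ALG})\le c(MST_2(P))+4(n-1)+C$.

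Compare this with $\mathrm{OPT}\ge c(MST_2(P))+C$, using only $C\ge n$. The ratio is at most $1+4(n-1)/n$, which only bounds it by $5$. Your target $c(MST_{cp}(P))+4C$ would need $4(n-1)\le 3C$, and $C\ge n$ does not give that.

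Charging the walks to endpoint squares via a bounded Euclidean-MST degree cannot repair this. It only redistributes the same total $2\sum_i\deg(s_i)=4(n-1)$.

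\textbf{The missing observation.} $d_{cp}$ is a minimum over \emph{both} designated entry/exit corners of each square. From any boundary point $p$, the nearer of the two is within boundary distance $3/2$: adjacent corners leave a complementary arc of length $3$, and opposite corners give at most $1$. Hence each edge costs at most $d_2(s_i,s_j)+3$. Then $c(\mathrm{ALG})\le c(MST_2(P))+3(n-1)+C$, and the ratio is at most $1+3(n-1)/(c(MST_2(P))+C)\le 1+3(n-1)/n<4$. This is exactly the paper's accounting.

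\textbf{Middle leg.} Here you are more careful than the paper, which simply asserts the $+3$ detour bound. Your cycle-property argument already closes the issue, corner clipping included. Suppose segment $pq$ meets another square $s_k$ at a point $x$. Disjointness forces $x\ne p,q$, so $d_2(s_i,s_k)\le|px|<|pq|$ and $d_2(s_k,s_j)\le|xq|<|pq|$. Then $e$ is the strict maximum on the triangle $s_is_ks_j$, and no MST contains such an edge.

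So drop the detour fallback. It is unnecessary, and any extra overhead charged to $s_k$ would push the ratio above $4$ anyway.

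\textbf{Running time.} A visibility-graph Dijkstra does not answer a single query in $O(n\log n)$. You need an $O(n\log n)$ shortest-path-map method, such as Hershberger and Suri's continuous Dijkstra. Alternatively, as the paper does, use the Guo--Maheshwari--Sack structure, with $O(n^2\log n)$ preprocessing and $O(n\log n)$ queries.
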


Using the standard trick to approximate TSP of $P$, i.e., doubling the edges of ${MST}(P)$, we can obtain a factor-8
approximation to the TSP problem of $P$ --- assuming entering/exiting a plot twice is allowed.

\section{Concluding Remarks}
In this paper, we show that the problem of computing the MST of a set of disjoint axis-aligned squares under the crop plot distance is NP-hard; moreover, we present a simple factor-4 approximation. For the approximation algorithm, one immediate question is what would happen if we could model the crop plots as axis-aligned rectangles, possibly with different size. The quick answer is that if the aspect ratio of each rectangle is bounded and if the ratio of the largest/smallest rectangle length (resp. height) is also bounded --- quite realistic in a farm, a constant factor still holds. 

For the approximation of the TSP of $P$, certainly the factor-8 approximation is quite straightforward --- if entering/exiting a plot twice is allowed. Even under this assumption, would it be possible to improve the factor? At the point, the general idea similar to that of Christofides \cite{Christofides1976}, which is to compute a minimum weight matching $M$ on the nodes on ${MST}$ with odd degrees then compute an Euler tour on ${MST} \cup {M}$, does not seem to work, as the crop plot distance
does not satisfy the triangle inequality. More research is needed along this line.


\section*{Appendix}

\subsection*{Proof of Proposition 1}
\begin{proof}
    First suppose \(A\) and \(B\) are both either in a gap gadget or
    connector from the gap gadget to a turn or center gadget.
    As every leg is 15 units apart and the vertical connector can be no more
    than 1.5 unit horizontally from the leg in the embedding
    we have that no plot can be closer than 12 units apart and we are done.

    Now without loss of generality, suppose \(A\) is either in the turn or clause gadget or the connector among these gadgets.
    Note we only need to worry about the case when \(A\) is to the left or right of clause. We first consider the case where \(A\) is to the left.
    Observe the left leg of the clause can be no more than 1 unit further from \(B\) than \(A\) and this only occurs when \(B\) is in the left turn gadget. Regardless, we have all entrance and exits of \(A\) and \(B\) are at least 11 units apart.
    The case where \(A\) is to the right of clause of \(B\) follows similarly.
    
    Finally, suppose \(A\) and \(B\) are both in either a turn or center or
    the connector among them. At there closest these plots are 11 units apart in the \(x\) coordinate due to scaling. Thus their distance must be greater than 11.
\qed
\end{proof}

\subsection*{Proof of Proposition 3}
\begin{proof}
    Suppose for the sake of contradiction \(\mathcal{T}[C_i]\) is not connected in some MST of \(P_\phi\),  \(\mathcal{T}\). Recall \(C_i\) has an MST of cost exactly \((|C_i|-1)/2\).
    Consider any two plots \(A, B\) that are adjacent in some MST for \(C_i\) but not in same connected component in \(\mathcal{T}[C_i]\). Note that the path in \(\mathcal{T}\) that connects these two plots cross two
    edges of length at least \(5\) by~\cref{lem:prop:not-clause-dist}.
    Connecting the entrance of \(A\) to the exit of \(B\) produces a cycle in \(\mathcal{T}\).
    Observe the length of this edge can not be greater than half of the boundary length of \(A\) and \(B\) plus 1/2,
    thus is at most 4.5, thus we can produce a smaller spanning tree for \(P_\phi\) by removing an edge of length 5. 
    This contradicts \(\mathcal{T}\) being an MST for \(P_\phi\).
    Thus \(\mathcal{T}[C_i]\) must be connected.

    Now observe that, regardless of points selected
    as entrance and exit on any gap gadget in the trace of clause, it can be connected to the plot in the variable gadget directly below it with a path of length no greater than half of the length to the boundary plus the size of gap, which is 5.
    As this path has a total length of 10 and by \cref{lem:clause-dist} any connection to any other clause has to have length greater than 11,
    an MST always connects to the base of \(P_\phi\) instead of another clause. 
\qed
\end{proof}

\subsection*{Proof of Lemma 3}
\begin{proof}
    First we prove the forward direction. Assume the \mrpsat instance \(\phi\) is satisfiable.
    We begin by constructing the base and connecting the variable gadgets. For each variable assigned \emph{true}, we select the top edges of the plots in its corresponding gadget and connect consecutive plots using edges of length \(\frac{1}{2}\). For variables assigned \emph{false}, we instead select the bottom edges and connect the plots in the same manner.
    Then for each clause in the \mrpsat instance, we pick one variable that can make the clause true and connect that variable to the clause leg. This would introduce additional \(\frac{9}{2}\) distance (\(5 - \frac{1}{2}\)). 
    Using this gap edge we can cover everything in the clause gadget. 
    Since we cover each plot, we introduce at least \(\frac{1}{2}\cdot (|P_{\phi}|-1)\) cost to the MST, additionally, since we have to use the gap edges, we introduce an additional \(\frac{9}{2}m\) cost to the overall MST.
    Moreover, numerical approximation contributes at most \(\frac{1}{128}\) error to the MST cost as there are \(5m+n-1\) connectors in \(P_{\phi}\) and \({2\cdot(5m+n-1)\epsilon}=\frac{1}{128}\frac{2\cdot(5m+n-1)}{2^{2\cdot(5m+n-1)}}\leq\frac{1}{128}\).
    This concludes the forward direction.

    Similar to the previous argument,
    we see the
    minimum score obtainable by an MST of \(P_\phi\), regardless if \(\phi\) is satisfiable or not,
    is \(\frac{9}{2}\cdot m + \frac{1}{2}(|P_\phi| -1) - 1/128\).
    This occurs when each clause is connect to
    base with edge of  cost \(5\), which by~\cref{lem:prop:not-clause-dist} is the smallest possible cost,
    and every other edge is of minimum cost.

    We now prove the backward direction. We prove this claim by contradiction. Suppose we have an MST of size \(\frac{9}{2}\cdot m + \frac{1}{2}(|P_\phi| -1) + 1/128\) and \(\phi\) is not satisfiable.
    As \(\phi\) is not satisfiable, \cref{lem:good-is-sat} tells us that a bad edge must occur in the MST.
     If the bad edge appears in place of an edge of minimum distance or in a connector, then it adds more than \(\frac{1}{4}\) additional cost. 
     On the other hand, if the bad edge appears in place of an edge of cost \(5\), then
     it incurs an additional \(\frac{\sqrt{101}}{2}-5 > \frac{1}{64}\) cost.
     In either case, we increase the cost of the minimum solution by more than \(1/64\).
     Therefore, the cost of the MST must be greater than \(\frac{9}{2}\cdot m + \frac{1}{2}(|P_\phi| -1) + 1/128\).
     This is a contradiction. Therefore, the backward direction is proven. 
     
\qed
\end{proof}

\subsection*{Proof of Theorem 2}

\begin{proof}
First note that $c(MST_{cp}(P))\geq c(MST_2(P))$. For each edge $e=(s_i,s_j)$ in $MST_2(P)$ (with $s_i,s_j\in P$), let the $L_2$-distance between two squares 
$s_i$ and $s_j$ be $d(s_i,s_j)$. Then geodesic distance returned by the algorithm $d'_{cp}(s_i,s_j)$ satisfies
$d'_{cp}(s_i,s_j)\leq 3+d(s_i,s_j)$, this can be achieved by connecting the entry/exit points in $s_i$ and $s_j$ via
walking around the corners in $s_i,s_j$ only. Consequently, the total approximation cost ${APP}$ is
\[{APP}\leq 3(n-1)+c(MST_2(P))+C;\]
while the optimal solution ${OPT}$ satisfies
\[{OPT}\geq c(MST_{cp}(P))+C\geq c(MST_2(P))+C.\]
Consequently, the approximation factor is 
\[\frac{{APP}}{{OPT}}\leq 1+\frac{3(n-1)}{c(MST_2(P))+C}\leq 4,\]
as $c(MST_2(P))\geq 0$ and $C\geq n$.

As the algorithm by McAlister, Kirkpatrick and Snoeyink only takes $O(n\log n)$ time (when the input is the set $S$ of $n$ squares), the total running time of the algorithm is dominated by computing the geodesic distance between vertices in $s_i$ and $s_j$ (for all edges $e=(s_i,s_j)$ of $MST_2(P)$),
which could certainly be computed using Dijkstra's algorithm on the visibility graph of the vertices of squares in $P$. However,
with the algorithm by Guo, Maheshwari and Sack \cite{DBLP:conf/aaim/GuoMS08}, we could build a data structure in $O(n^2\log n)$ time and $O(n^2)$ space such that each geodesic distance query can be done in $O(n\log n)$ time. This gives us a total running time of $O(n^2\log n)$.
\qed
\end{proof}
\end{document}